\newtheorem{thm}{Theorem}[section]
\newtheorem{remark}{Remark}[section]
\def\sds{\strut \displaystyle}
\newcommand{\FF}{\mathbb{F}}
\def\F{{\mathbb F}}
\def\C{{\mathbb C}}
\def\ri{{\rm i}}
\newcommand{\MS}{\mathbb{S}}
\newtheorem{definition}{Definition}
\newtheorem{example}{Example}
\newcommand{\beq}{\begin{equation}}
\newcommand{\eeq}{\end{equation}}
\newcommand{\beas}{\begin{eqnarray*}}
\newcommand{\eeas}{\end{eqnarray*}}
\newcommand{\bea}{\begin{eqnarray}}
\newcommand{\eea}{\end{eqnarray}}
\newcommand{\bei}{\begin{itemize}}
\newcommand{\eei}{\end{itemize}}
\newcommand{\ben}{\begin{enumerate}}
\newcommand{\een}{\end{enumerate}}
\newcommand{\bet}{\begin{theorem}}
\newcommand{\eet}{\end{theorem}}
\newcommand{\bel}{\begin{lemma}}
\newcommand{\eel}{\end{lemma}}
\newcommand{\bep}{\begin{proposition}}
\newcommand{\eep}{\end{proposition}}
\newcommand{\bed}{\begin{definition}}
\newcommand{\eed}{\end{definition}}
\newcommand{\bec}{\begin{corollary}}
\newcommand{\eec}{\end{corollary}}
\newcommand{\bex}{\begin{example}}
\newcommand{\eex}{\end{example}}
\newcommand{\argmin}{\mathop{\rm arg\min}}
\newcommand{\abs}[1]{\lvert#1\rvert}
\newcommand{\bigabs}[1]{\big\lvert#1\big\rvert}
\newcommand{\Bigabs}[1]{\Big\lvert#1\Big\rvert}
\newenvironment{proof}[1][Proof]{\noindent\textbf{#1.} }{\
  \rule{0.5em}{0.5em}}
\begin{document}

\markboth{{\small\it Compressed Sensing Matrices from Fourier Matrices}} {{\small\it Z. XU and G. Xu}}

\title{Compressed Sensing Matrices from Fourier Matrices }

\author{
 Guangwu Xu \thanks{Department of EE \& CS, University of Wisconsin-Milwaukee,
Milwaukee, WI 53211, USA; e-mail: {\tt gxu4uwm@uwm.edu}. Research
supported in part by the National 973 Project of China (No.
2013CB834205).}  and
Zhiqiang Xu \thanks{Inst. Comp. Math., Academy of Mathematics and Systems Science,
Chinese Academy of Sciences, Beijing, China;
e-mail: {\tt xuzq@lsec.cc.ac.cn}. Zhiqiang Xu was supported  by NSFC grant 11171336 and by the Funds for Creative
Research Groups of China (Grant No. 11021101).}
}

\date{}

\maketitle

\begin{abstract}
The class of Fourier matrices is of special importance in compressed sensing
(CS).
This paper concerns deterministic construction of compressed sensing
matrices from Fourier matrices. By using Katz' character sum estimation, we
are able to design a deterministic procedure to select rows from a
Fourier matrix to form a good compressed sensing matrix for sparse
recovery. The sparsity bound in our construction is similar to that
of binary CS matrices constructed by DeVore which greatly improves
previous results for CS matrices from Fourier matrices. Our approach
also provides more flexibilities in terms of the dimension of CS matrices.
As a consequence, our construction yields an approximately mutually
unbiased bases from Fourier matrices which is of particular interest
to quantum information theory. This paper also contains a useful improvement
to Katz' character sum estimation for quadratic extensions, with an
elementary and transparent proof. Some numerical examples are included.

\end{abstract}

\noindent{\bf Keywords:\/} $\ell_1$ minimization,  sparse recovery, mutual incoherence, compressed sensing matrices, deterministic construction,
approximately mutually
unbiased bases.

\section{Introduction}
In many practical situations the data  of concern is sparse under suitable representations.
Many problems turn
to be computationally amenable under the sparsity assumption. As a
notable example, it is now well understood that the $\ell_1$
minimization method provides an effective way for reconstructing
sparse signals in a variety of settings.

The goal of compressed sensing (CS) is to recover
a sparse vector $\beta\in\C^N$ from the linear sampling $y=\Phi \beta$ and the
sampling matrix (or compressed sensing matrix) $\Phi\in \C^{m\times N}$. A general model can be of the form $y = \Phi\beta + z$
with $z$ being a vector of errors (noise). In this case,
one needs to approximate the sparse vector $\beta$ from the linear sampling $y$, the compressed sensing matrix $\Phi$ and some information of $z$. However we shall only be
interested in the noiseless case (i.e., $z=0$) in this paper to simplify discussion.

In order to achieve less samples,
one requires that the size of $y$ be much smaller than the dimension of $\beta$, namely $m \ll N$.
A na\"ive approach for solving this problem is to consider
$\ell_0$ minimization, i.e.,
 \begin{equation}\label{P0}
(P_{0}) \quad\quad
\hat \beta = \argmin_{\gamma \in \C^N} \{\|\gamma\|_0 \; \mbox{
  subject to } \; y = \Phi\gamma \}.
\end{equation}
 However this is computationally infeasible. It is then natural to consider the
method of $\ell_1$ minimization which can be viewed as a convex
relaxation of $\ell_0$ minimization. The $\ell_1$ minimization method in
this context is
\begin{equation}\label{P1}
(P_{1}) \quad\quad
\hat \beta = \argmin_{\gamma \in \C^N} \{\|\gamma\|_1 \; \mbox{
  subject to } \; y = \Phi\gamma \}.
\end{equation}
This method has been successfully used as an effective way for reconstructing a
sparse signal in many settings.

Here, a central problem is to construct the compressed sensing matrices so that the solution to $(P_1)$ is the same as that to $(P_0)$.
One of the most commonly used frameworks for the compressed sensing matrices $\Phi$ is the restricted isometry property (RIP) which was
introduced by Cand\`es and Tao \cite{CanTao05}. RIP has been used in the randomized construction
of CS matrices (see for example \cite{BaDaDeWa,CanTao05,CanTao07}).
Another well-known framework in compressed sensing is the mutual incoherence property (MIP) of
Donoho and Huo \cite{DonHuo}. Several deterministic constructions of CS matrices are based on
MIP, e.g., \cite{Dev,LGGZ,xu}.

The main focus of this paper is on the deterministic construction of CS matrices.
In \cite{Dev}, DeVore presented a  deterministic construction of CS matrices using the
mutual incoherence: given a prime number $p$ and an integer $1<n\le p$, an $m\times N$ (with $m=p^2, N = p^{n}$)
binary matrix (i.e., each of its entry  is either $0$ or $1$) $\Phi$ can be found in a deterministic manner
such that whenever
\begin{equation}\label{DeVore_Bound}
k < \frac{\sqrt{m}}{2(n-1)} +\frac{1}2,
\end{equation}
a signal $\beta$ with the sparsity $k$ (i.e., $\beta$ has at most $k$ nonzero components, we also call such a vector $k$-sparse) can be produced by
solving ($P_1$)\footnote{In \cite{Dev}, it was stated that for RIP bound $\delta < 1$, it suffices that $k < \frac{\sqrt{m}}{n-1} +1$.
But for being able to recover $k$-sparse signal, one needs to use (\ref{DeVore_Bound}) based on the discussion in \cite{CaWaXu}. }.
Recently,  Li, Gao, Ge, and Zhang \cite{LGGZ} suggested a deterministic construction of of binary CS matrices
via algebraic curves over finite fields. As stated in \cite{LGGZ}, their construction is more flexible and
slightly improves DeVore's result when $N$ is large (in fact the examples in \cite{LGGZ} indicate that one sees
improvement only when $N\ge \Theta(m^{\sqrt[4]{m}})$).

It is remarked that, in terms of construction via MIP,
the bound (\ref{DeVore_Bound}) is asymptotically optimal because of the
Welch lower bound \cite{welch} (details will be given in section 2). This
means that one needs to require $k = O(\sqrt{m})$.
Using new estimates for sumsets in product sets and
for exponential sums with the products of sets possessing special additive
structure, Bourgain, Dilworth, Ford, Konyagin and Kutzarva \cite{BoDiFoKo}
were able to overcome the natural barrier $k = O(\sqrt{m})$. However
there is a restriction in the construction of \cite{BoDiFoKo},
namely the ratio $\frac{N}m$ is small.

Compared with binary matrices,  the class of Fourier matrices is of great theoretical and practical relevance to compressed sensing.
A notable technique of designing random partial Fourier matrices
 was suggested by Cand\`es and Tao \cite{CanTao06}, and was improved by
Rudelson and Vershynin \cite{RuVe}. Let ${\cal F}^{(N)}$ be the $N \times N$
Fourier matrix whose $(k,j)$-th entry is given by
\[
\left({\cal F}^{(N)}\right)_{k,j} = \exp\left(\frac{2\pi \ri kj}N\right).
\]
For $\Gamma = \frac{1}{\sqrt{N}}{\cal F}^{(N)}$, it was proved in
\cite{RuVe} that if
\begin{equation}\label{rand_condition}
m = O(k\log^4N),
\end{equation}
then a submatrix $\Phi$ consists of $m$ random rows of $\Gamma$
satisfying RIP conditions (which ensures $k$-sparse signal recovery) with high
probability. This means that given $N$ and $m$, the optimal $k$ is
$k= O\left(\frac{m}{\log^4 N}\right)$.

Deterministic construction of CS matrices from Fourier matrices  has also
received recent attention.  In \cite{Xia}, by using the difference set, Xia, Zhou
 and Giannakis constructed a deterministic partial Fourier matrices with size $m\times N$ whose
mutual incoherence constant meets the Welch lower bound (see Section 2). Such
matrices are CS matrices for recovering a
 $k$-sparse signal whenever
$k < \frac{1}{2}\bigg(\sqrt{\frac{(N-1)m}{N-m}}+1\bigg)$.
However, the restriction on the  matrix size ($m$ and $N$) is heavy. In fact,
one can only construct such partial Fourier matrices for some very special
parameters $(m, N)$  with small $\frac{N}m$ (see \cite{Xia}). Recently, Haupt, Applebaum, and Nowak proposed a deterministic procedure to produce a  partial
Fourier matrix
with size $m\times N$ for a large class of $(m,N)$ pairs \cite{HaApNo}.
More specifically, the matrices in \cite{HaApNo} are of size
$m\times N$ with $N$ being a prime and $m\in [N^{\frac{1}{d-1}},  N]$
for some integer $d\ge 2$. This class of CS matrices can be used to recover
$k$-sparse signal with
\begin{equation}\label{determ_condition_2}
k = O\left( m^{\frac{1}{9d^2 \log d}}\right).
\end{equation}

It is noted that there is a significant gap between the bound (\ref{determ_condition_2}) and the (asymptotically optimal) bound (\ref{DeVore_Bound}).
The allowed range of sparsity for the case of a
deterministic partial Fourier matrices is far smaller than that of the binary case.
Constructing partial Fourier matrices deterministically that work for a larger
sparsity range of signals is certainly of particular interest.

The aim of this paper is to construct a partial Fourier matrix which is a CS matrix,
through a deterministic procedure.  Based on a celebrated character sum estimation of Katz \cite{katz},
we are able to obtain a bound for the sparsity $k$ in the case of partial
Fourier matrices
that is similar to  (\ref{DeVore_Bound}). More precisely, we have shown that if
$q=p^a$ is a prime power and $n>1$, setting $N=q^n-1$ or $N=\frac{q^n-1}{p^b-1}=\frac{p^{an}-1}{p^b-1}$($b\mid a$ is required for this case),
then there is a deterministic process to select $m=q$ rows from the Fourier matrix ${\cal F}^{(N)}$ and build a (column normalized)
matrix $\Phi$, such that if
\begin{equation}\label{determ_condition_3}
k < \frac{\sqrt{m}}{2(n-1)} +\frac{1}2,
\end{equation}
then $\Phi$ can be used to reconstruct $k$-sparse signals via $(P_1$).
It is noted that (\ref{determ_condition_3}) also greatly improves (\ref{determ_condition_2}).
The result in this paper can also be used to recover the sparse trigonometric polynomial with a single variable \cite{rauhut1,rauhut}.

In this paper, we also  improve Katz' estimation for
quadratic extension fields with an elementary and transparent approach.
Using this improvement, we are able to construct a CS matrix which is
a partial Fourier matrix, and whose columns are a union of orthonormal
bases. This is a useful construction for sparse representation of signals
in a union of orthonormal
bases which has been a topic of some studies (see, for example
\cite{DonHuo,EladBruck,GhJa,GrNi}). Moreover, this construction produces  an
approximately mutually unbiased bases which is of particular interest
in quantum information theory.
We also
conduct some numerical experiments. The results show that the deterministic
partial Fourier matrices has a better performance over the random partial
Fourier matrices,
provided that these two classes of matrices are of comparable sizes.

This paper is organized as follows. The section  below provides
some necessary concepts and results to be used in our discussion. The
main results are given in section \ref{sec:main_res}.
The discussion of computational issues and
numerical results are contained in the last section.

\section{Background and Preparation}
\label{sec:notation}

\subsection{MIP and Sparse Recovery}
Let $\Phi$ be an $m\times N$ matrix with normalized  column vectors $\Phi_1, \Phi_2, \ldots, \Phi_N$.
Assume that each $\Phi_i$ is of unit (Euclidean) length. The mutual incoherence constant (MIC) is defined as
\[
\mu = \max_{i\neq j}|\langle \Phi_i, \Phi_j \rangle|.
\]
Even though in many situations a small $\mu$ is desired, the following
well-known result of Welch \cite{welch} indicates that
$\mu$ is bounded below
\[
\mu\,\, \geq \,\, \sqrt{\frac{N-m}{(N-1)m}}.
\]

In \cite{DonHuo}, Donoho and Huo gave a
computationally verifiable condition on the parameter $\mu$ that ensures the sparse signal recovery: let $\Phi$
be a concatenation of two orthonormal
matrices. Assume $\beta$ is $k$-sparse. If
\begin{equation}\label{mip}
\mu < \frac{1}{2k-1},
\end{equation}
then the solution $\hat\beta$ for $(P_1)$ is exactly $\beta$.

This result of \cite{DonHuo} was extended to a general matrix $\Phi$ by Fuchs \cite{Fuchs}, and,
Gribonval and Nielsen \cite{GrNi}, both in the noiseless case. For the noisy case of the bounded error,
\cite{DonElaTem,CaXuZh,Tseng} proved that $\ell_1$-minimization gives a stable approximation of
the signal $\beta$ under some conditions that are stronger than (\ref{mip}).
The open problem of whether condition (\ref{mip}) , namely $\mu < \frac{1}{2k-1}$, is sufficient for
stable approximation of $\beta$ in the noisy case was settled by  Cai, Wang and Xu in \cite{CaWaXu}--actually the
authors of \cite{CaWaXu} even proved that this condition is sharp, for both noisy and noiseless cases. Another remark is that
if one considers only the noiseless case, the proof in \cite{CaWaXu} simplifies that of \cite{DonHuo,Fuchs,GrNi}

Although the sparse recovery condition (\ref{mip}) is rather strong, it is advantageous in checking whether a matrix meets the condition. Such a checking procedure
requires $O(N^2)$ steps which is computationally feasible. The MIC $\mu$ has been explicitly used in the design of compressed sensing matrices
by several works, e.g., \cite{CaJi,Dev,BoDiFoKo,LGGZ,xu}.

\subsection{RIP and Sparse Recovery}

We say that $\Phi$ satisfies the Restricted Isometry Property (RIP) of order $k$ and constant $\delta_k
\in [0,1)$ if
\begin{equation}\label{eq:con}
(1-\delta_k) \|x\|_2^2 \leq \|\Phi x\|_2^2 \leq (1+\delta_k) \|x\|_2^2
\end{equation}
holds for all $k$-sparse vector $x$ (see \cite{CanTao05}).
 In fact, (\ref{eq:con}) is equivalent to requiring that the
 Grammian matrices $\Phi_T^\top\Phi_T$ has all of its eigenvalues in $[1-\delta_k,
 1+\delta_k]$ for all $T$ with $|T|\leq k$, where $\Phi_T$ is the submatrix
of $\Phi$ whose columns are those with indexes in $T$. It has been shown that,  under various conditions on RIP constant $\delta_k$, such as $\delta_{k}< \frac{1}3$ (see
\cite{CZ}),
one can
 recover the $k$-sparse signal by solving $(P_1)$.

\subsection{Katz' Characters Sums Estimation and Its Improvement}

Let $\F_{q^n}$ be a finite field of order $q^n$ where $q$ is a prime power.
A multiplicative character $\chi$  is a homomorphism from the  multiplicative group $\left<\F_{q^n}^{\ast}, \cdot\right>$ to $\MS^1$ where $\MS^1=\{z\in \C: \|z\|=1\}$. By a trivial character we mean the function that sends every element of $\F_{q^n}^{\ast}$ to $1$.

Let $g$ be a primitive root of $\F_{q^n}$ (namely, $g$ generates the multiplicative group $\F_{q^n}^{\ast}$). For each element $u\in \F_{q^n}^{\ast}$, we define
the discrete logarithm of $u$ with respect to the base $g$ to be the non-negative integer $m$ ( $m < q^n-1$)  such that
$g^m = u$, and we write $\log_g u = m$.

Let $N= q^n - 1$. Then a nontrivial multiplicative character of $\F_{q^n}^{\ast}$ is of the form
\[
\chi_a (u)= e^{a\frac{2\pi  \ri \log_g u}N}
\]
where $1\le a \le N-1$.

For a nontrivial multiplicative
character, a celebrated theorem of Katz \cite{katz} concerns the magnitude of summation of the
character values over a special coset of $\F_{q}$ (as an additive subgroup of $\F_{q^n}$). More precisely,
Katz proved that for a nontrivial multiplicative
character $\chi_a$ of $\F_{q^n}^*$, and for an element $\alpha$ of $\F_{q^n}$ with $\F_{q^n}=\F_{q}(\alpha)$, there is an estimation
\begin{equation}\label{katz}
\left|\sum_{t\in \F_{q}} \chi_a(t-\alpha)\right| \le (n-1)\sqrt{q}.
\end{equation}

The method that Katz used in \cite{katz} to obtain the estimate (\ref{katz}) was a geometric one. In \cite{li}, Li presented an
arithmetic proof of (\ref{katz}) using the Riemann hypothesis for the projective line over finite fields. It is interesting to
note, in the case of quadratic extension of the fields, we can get a more precise estimation of Katz sum with
an elementary and transparent proof. This result will be a useful tool in our construction of special CS matrices.
Our improvement of Katz estimation  is stated as
\begin{thm}\label{2dim_sum}
Suppose that $\chi_a$ is a nontrivial multiplicative
character of $\F_{q^2}^*$. For an element $\alpha$ of $\F_{q^2}$ with $\F_{q^2}=\F_{q}(\alpha)$, we have
\begin{eqnarray}\label{katz2}
\left|\sum_{t\in \F_{q}} \chi_a(t-\alpha)\right| &=& \sqrt{q},\qquad \mbox{ if } (q-1)\nmid a, \label{eq:lemma1}\\
\sum_{t\in \F_{q}} \chi_a(t-\alpha) &=& -1,\qquad \mbox{ if } (q-1)\mid a.\label{eq:lemma2}
\end{eqnarray}
\end{thm}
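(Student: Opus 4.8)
The plan is to reduce everything to one observation: the divisibility condition $(q-1)\mid a$ governs whether the restriction of $\chi_a$ to the subfield group $\F_q^\ast$ is trivial, after which the two cases succumb to different but equally elementary devices. Throughout I write $\bar\alpha=\alpha^q$ for the nontrivial conjugate of $\alpha$ over $\F_q$ (so $\bar\alpha\neq\alpha$ because $\F_{q^2}=\F_q(\alpha)$), and I note every summand is well defined since $t-\alpha\neq0$ for $t\in\F_q$. Fixing the primitive root $g$, the subgroup $\F_q^\ast$ is generated by $g^{q+1}$, which has order $q-1$ and satisfies $(g^{q+1})^q=g^{q+1}$; hence $\chi_a(g^{q+1})=e^{2\pi\ri a/(q-1)}$, so $\chi_a|_{\F_q^\ast}$ is trivial exactly when $(q-1)\mid a$. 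This identifies the two cases of the statement with ``$\chi_a$ trivial on $\F_q^\ast$'' and ``$\chi_a$ nontrivial on $\F_q^\ast$''.

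For the case $(q-1)\mid a$ I would argue projectively. Since $\chi_a$ is then trivial on $\F_q^\ast$ it descends to a \emph{nontrivial} character of the quotient $\F_{q^2}^\ast/\F_q^\ast$, a group of order $q+1$ whose elements are precisely the one-dimensional $\F_q$-subspaces of $\F_{q^2}$. In the basis $\{1,\alpha\}$ the element $t-\alpha$ has coordinates $(t,-1)$, so the lines $\F_q\cdot(t-\alpha)$ for $t\in\F_q$ are pairwise distinct and, together with the line $\F_q\cdot 1$, exhaust the quotient. Orthogonality for the nontrivial descended character gives $0=\chi_a(1)+\sum_{t\in\F_q}\chi_a(t-\alpha)=1+S$, whence $S=-1$, which is \eqref{eq:lemma2}.

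For the case $(q-1)\nmid a$ I would evaluate $|S|$ by Gauss-sum inversion over the larger field. First I would describe the index set intrinsically: $\{t-\alpha:t\in\F_q\}=\{y\in\F_{q^2}:y^q-y=\delta\}$ with $\delta:=\alpha-\bar\alpha\neq0$, and $\delta$ spans the one-dimensional trace-zero subspace $V_0=\ker\tr_{\F_{q^2}/\F_q}$. Fixing a nontrivial additive character $\psi_0$ of $\F_q$, setting $\Psi=\psi_0\circ\tr_{\F_{q^2}/\F_q}$, and letting $G$ denote the Gauss sum over $\F_{q^2}$, I substitute the inversion formula $\chi_a(y)=G(\bar\chi_a)^{-1}\sum_{c\in\F_{q^2}^\ast}\bar\chi_a(c)\,\Psi(cy)$ and interchange summations. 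The inner sum $\sum_{t\in\F_q}\Psi\big(c(t-\alpha)\big)=\Psi(-c\alpha)\sum_{t\in\F_q}\psi_0\big(t\,\tr(c)\big)$ collapses, by additive orthogonality over $\F_q$, to $q\,\Psi(-c\alpha)$ when $\tr(c)=0$ and to $0$ otherwise. Hence only $c\in V_0\setminus\{0\}$ survive; parametrizing $c=s\delta$ with $s\in\F_q^\ast$ and using the identity $\tr_{\F_{q^2}/\F_q}(\delta\alpha)=(\alpha-\bar\alpha)^2=\delta^2$, the surviving sum becomes a Gauss sum over the base field,
\begin{equation*}
S=\frac{q\,\bar\chi_a(\delta)}{G(\bar\chi_a)}\sum_{s\in\F_q^\ast}\eta(s)\,\psi_0(-\delta^2 s),\qquad \eta:=\bar\chi_a\big|_{\F_q^\ast}.
\end{equation*}

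Finally I would take absolute values. Because $\chi_a$ is nontrivial on $\F_{q^2}^\ast$ we have $|G(\bar\chi_a)|=q$, and because here $\eta$ is a nontrivial multiplicative character of $\F_q^\ast$ while $\delta^2\neq0$, the base-field Gauss sum has modulus $\sqrt q$; these two standard magnitude facts give $|S|=\sqrt q$, which is \eqref{eq:lemma1}. (As a consistency check, in the excluded subcase $\eta$ is trivial and the last sum equals $-1$, recovering the projective computation.) The main obstacle is the middle step: recognizing the coset $\{y:y^q-y=\delta\}$, pushing the $\F_{q^2}$-inversion down to the base field via the collapse to $V_0$, and verifying the trace identity $\tr_{\F_{q^2}/\F_q}(\delta\alpha)=\delta^2$; once this reduction is in place only the two textbook Gauss-sum moduli remain. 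Equivalently, one may repackage the same computation as an evaluation of $|S|^2=\sum_{t,s}\chi_a\big((t-\alpha)/(s-\alpha)\big)$, isolating the diagonal contribution $q$ and showing the off-diagonal terms cancel through the identical additive-character orthogonality.
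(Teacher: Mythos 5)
Your proof is correct, but for the main case it takes a genuinely different route from the paper's. For $(q-1)\mid a$, your quotient-group argument---$\chi_a$ descends to a nontrivial character of $\F_{q^2}^*/\F_q^*$, whose $q+1$ elements are exactly the lines $\F_q\cdot 1$ and $\F_q\cdot(t-\alpha)$, $t\in\F_q$, so orthogonality gives $1+S=0$---is really the same identity the paper obtains by expanding $0=\sum_{u\in\F_{q^2}^*}\chi_a(u)$ in the basis $\{1,\alpha\}$ and factoring out $\sum_{t\in\F_q^*}\chi_a(t)=q-1$; yours is a more conceptual packaging of the same computation. For $(q-1)\nmid a$, however, the two proofs diverge: the paper computes $\abs{S}^2$ directly, using that $(q+1)\mid\log_g t$ exactly when $t\in\F_q^*$ and that the quotients $(t_1-\alpha)/(t_2-\alpha)$ with $t_1\neq t_2$ realize each discrete logarithm $k$ with $(q+1)\nmid k$ exactly once (a counting/bijectivity step the paper leaves implicit), after which two geometric series give $\abs{S}^2=q$. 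You instead recognize $\{t-\alpha:t\in\F_q\}$ as the fiber $\{y: y^q-y=\delta\}$ with $\delta=\alpha-\alpha^q$, apply Gauss-sum inversion over $\F_{q^2}$, and let additive orthogonality collapse the dual sum to the one-dimensional trace-zero line $\F_q\delta$; your trace identity $\tr_{\F_{q^2}/\F_q}(\delta\alpha)=\delta^2$ is correct (since $\delta^q=-\delta$), and the closed form $S=q\,\bar\chi_a(\delta)\,G(\bar\chi_a)^{-1}\sum_{s\in\F_q^*}\eta(s)\psi_0(-\delta^2 s)$ then yields $\abs{S}=q\cdot\sqrt{q}/q=\sqrt{q}$ from the two standard Gauss-sum moduli, with $\eta$ nontrivial precisely because $(q-1)\nmid a$, as your opening observation via $\chi_a(g^{q+1})=e^{2\pi\ri a/(q-1)}$ establishes. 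What each approach buys: the paper's argument is entirely self-contained---only geometric series and the subfield criterion---consistent with its advertised ``elementary and transparent'' proof; yours imports the moduli $\bigabs{G(\bar\chi_a)}=q$ and $\bigabs{G_q(\eta)}=\sqrt{q}$ (whose textbook proofs are themselves $\abs{G}^2$-expansions, so the squaring is hidden inside the imported lemmas), but in exchange delivers strictly more information: an exact evaluation of $S$ as a ratio of Gauss sums rather than just its modulus, plus the structural explanation that the quadratic case is special because the Fourier support of the additive coset $-\alpha+\F_q$ is one-dimensional (for degree $n>2$ that annihilator has dimension $n-1$, which is why this route does not trivially recover Katz' general bound). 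Your closing remark is also accurate: unwinding your computation into $\abs{S}^2=\sum_{t_1,t_2}\chi_a\bigl((t_1-\alpha)/(t_2-\alpha)\bigr)$ recovers precisely the paper's method.
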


\begin{proof}
To this end, we first prove that
\begin{eqnarray}
\sum_{t\in \F_q^*}\chi_a(t)=\begin{cases}
0, & q-1\nmid a\\
q-1, & q-1\mid a
\end{cases} .\label{eq:suma}
\end{eqnarray}
Let $g$ be a primitive root in $\FF_{q^2}$. For $0\leq m\leq q-2$, the fact that $(g^{m(q+1)})^{(q-1)}=g^{m(q^2-1)}=1$
implies
\[
\{ g^{m(q+1)} : 0\leq m\leq q-2\}=\FF_q^*,
\]
or, equivalently
\begin{equation}\label{eq:q1}
(q+1) \mid \log_g t \iff t\in \F_q^*.
\end{equation}
Therefore, (\ref{eq:suma}) follows.

Now assume that $(q-1)\mid a$. We have
\begin{eqnarray*}
0&=& \sum_{t\in \FF_{q^2}^*}\chi_a(t)= \sum_{t_1,t_2\in \FF_q, (t_1,t_2)\neq (0,0)}\chi_a(t_1+t_2\alpha)\\
&=&\sum_{t\in \FF_q^*}\chi_a(t)+\sum_{t_2\in \FF_q^*}\sum_{t_1\in \FF_q}\chi_a(t_1+t_2\alpha)
= \sum_{t\in \FF_q^*}\chi_a(t)+ \sum_{t_2\in \FF_q^*}\chi_a(t_2) \sum_{t_1\in \FF_q}\chi_a(t_2^{-1}t_1+\alpha)\\
&=&\sum_{t\in \FF_q^*}\chi_a(t)+ \sum_{t_2\in \FF_q^*}\chi_a(t_2) \sum_{t\in \FF_q}\chi_a(t+\alpha)\\
&=&\left(1+\sum_{t\in \FF_q}\chi_a(t+\alpha)\right)\sum_{t\in \FF_q^*}\chi_a(t)=
\left(1+\sum_{t\in \FF_q}\chi_a(t+\alpha)\right)(q-1),
\end{eqnarray*}
and hence $\sum_{t\in \FF_q}\chi_a(t+\alpha)=-1$.

Next we consider the case $(q-1)\nmid a$ and we shall prove (\ref{eq:lemma1}).  It is
not difficult to check that $\sds\frac{t_1-\alpha}{t_2-\alpha}\notin \F_{q}^*$ for $t_1,t_2\in \F_q$
and $t_1\neq t_2$.
In fact, if $\sds \frac{t_1-\alpha}{t_2-\alpha} = s\in \F_{q}$, then $s\neq 1$.
Therefore $\alpha = \sds \frac{s t_2-t_1}{s-1}\in \F_q$.
This is absurd as $\alpha\notin \F_q$. Therefore by (\ref{eq:q1}) we see that
$$
(q+1)\nmid \log_g\frac{t_1-\alpha}{t_2-\alpha},
$$
provided $t_1,t_2\in \FF_q$ and $t_1\neq t_2$. Now
\begin{eqnarray*}
\Bigabs{\sum_{t\in \FF_q}\chi_a(t-\alpha)}^2&=&\Bigabs{\sum_{t\in \FF_q}e^{a\frac{2\pi \ri \log_g(t-\alpha)}{q^2-1}}}^2=
\left(\sum_{t_1\in \FF_q}e^{a\frac{2\pi \ri \log_g(t_1-\alpha)}{q^2-1}}\right)\left(\sum_{t_2\in \FF_q}e^{a\frac{-2\pi \ri \log_g(t_2-\alpha)}{q^2-1}}\right)\\
&=&q+\sum_{t_1,t_2\in\FF_q, t_1\neq t_2}e^{a\frac{2\pi \ri \log_g\frac{t_1-\alpha}{t_2-\alpha}}{q^2-1}}=q+\sum_{k=1, q+1\nmid k}^{q^2-2}e^{a\frac{2\pi \ri k}{q^2-1}}\\
&=&q+\sum_{k=1}^{q^2-2}e^{a\frac{2\pi \ri k}{q^2-1}}-\sum_{j=1}^{q-2}e^{a\frac{2\pi \ri j(q+1)}{q^2-1}}=q+\sum_{k=0}^{q^2-2}e^{a\frac{2\pi \ri k}{q^2-1}}-\sum_{j=0}^{q-2}e^{a\frac{2\pi \ri j(q+1)}{q^2-1}}\\
&=& q,
\end{eqnarray*}
which yields
$$\bigabs{\sum_{t\in \FF_q}\chi_a(t-\alpha)}=\sqrt{q}.$$
\end{proof}

\section{The Main Results and Construction Procedure}
\label{sec:main_res}
In this section, we shall discuss construction of CS matrices by deterministically selecting set of
rows from the $N\times N$ Fourier matrix.

In the following discussion, we let $F_0, F_1,\ldots, F_{N-1}$ be
the rows of the $N\times N$ Fourier matrix ${\cal F}^{(N)}$, i.e.,
\[
{\cal F}^{(N)} = \begin{pmatrix}
F_{0}\\ F_{1} \\ \vdots\\F_{N-1}
\end{pmatrix}.
\]

Suppose that $M$ is a subset of $\{0,1,\ldots,N-1\}$, i.e., $M=\{m_0,m_1,\ldots,m_r\}\subset \{0,1,\ldots,N-1\}$.  Then
we can define the partial Fourier matrix associated with $M$ as
\[
{\cal F}^{(N)}_M := \begin{pmatrix}
F_{m_0}\\ F_{m_1} \\ \vdots\\F_{m_{r}}
\end{pmatrix}.
\]
Let $q$ be a prime power and let $\alpha\in \F_{q^n}$ be such that
\[
\F_{q^n} = \F_q(\alpha).
\]
Assume that $g$ is a generator of the cyclic group $\F_{q^n}^{\ast}$, we then have
\begin{thm} \label{thm:3.1}
Let $q$ be a prime power, and $n>1$ be a positive integer. Let $N=q^n-1$ and
 $$M=\{m = \log_g (t-\alpha): t\in \F_q\}.$$
Then the $q\times N$ matrix $\Phi=\frac{1}{\sqrt{q}}{\cal F}^{(N)}_M $
has MIC
\[
\mu \le \frac{n-1}{\sqrt{q}}.
\]
\end{thm}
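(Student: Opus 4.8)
The plan is to reduce the mutual incoherence constant to a family of character sums over $\F_q$ and then invoke Katz' estimate (\ref{katz}) directly. First I would describe the columns of $\Phi$ explicitly. Writing $M=\{m_t=\log_g(t-\alpha):t\in\F_q\}$, the $j$-th column $\Phi_j$ (for $0\le j\le N-1$) has entries $\frac{1}{\sqrt q}\,e^{2\pi\ri\, m_t j/N}$ as $t$ ranges over $\F_q$. Each such entry has modulus $1/\sqrt q$, so $\|\Phi_j\|_2^2=q\cdot(1/q)=1$ and the columns are genuinely unit vectors, as the definition of the MIC requires. I would also record that the $q$ indices in $M$ are distinct: since $n>1$ forces $\alpha\notin\F_q$, the elements $t-\alpha$ $(t\in\F_q)$ are $q$ distinct nonzero elements of $\F_{q^n}$, hence have distinct discrete logarithms in $\{0,\dots,N-1\}$, so $\Phi$ does have $q$ rows.

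Next I would compute the inner product of two distinct columns. For $j\neq j'$, set $a\equiv j-j'\pmod N$, so that $1\le a\le N-1$. Using the Hermitian inner product and the defining form of the characters $\chi_a$ from Section \ref{sec:notation},
\[
\langle\Phi_j,\Phi_{j'}\rangle=\frac1q\sum_{t\in\F_q}e^{2\pi\ri\, m_t(j-j')/N}=\frac1q\sum_{t\in\F_q}e^{a\frac{2\pi\ri\,\log_g(t-\alpha)}{N}}=\frac1q\sum_{t\in\F_q}\chi_a(t-\alpha).
\]
Because $1\le a\le N-1$, the character $\chi_a$ is nontrivial on $\F_{q^n}^*$, which is exactly the hypothesis under which Katz' bound is stated.

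Finally, applying (\ref{katz})—valid precisely because $\F_{q^n}=\F_q(\alpha)$ and $\chi_a$ is nontrivial—I would obtain
\[
\bigabs{\langle\Phi_j,\Phi_{j'}\rangle}=\frac1q\Bigabs{\sum_{t\in\F_q}\chi_a(t-\alpha)}\le\frac{(n-1)\sqrt q}{q}=\frac{n-1}{\sqrt q}.
\]
Taking the maximum over all $j\neq j'$ then yields $\mu\le\frac{n-1}{\sqrt q}$, as claimed.

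The argument is essentially a translation, so there is no serious analytic obstacle; the single key observation is that the phase choice $m_t=\log_g(t-\alpha)$ converts the additive Fourier exponent into a multiplicative character, turning the column inner product into exactly the coset sum $\sum_{t\in\F_q}\chi_a(t-\alpha)$ that Katz bounds. The only points needing a moment of care are verifying $|M|=q$ (so the matrix has its full row count) and checking that $a=j-j'$ never vanishes modulo $N$ (so $\chi_a$ stays nontrivial); both follow at once from $\alpha\notin\F_q$ and $j\neq j'$.
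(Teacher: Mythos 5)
Your proposal is correct and follows essentially the same route as the paper: reduce $\abs{\langle\Phi_j,\Phi_{j'}\rangle}$ to the coset character sum $\frac1q\sum_{t\in\F_q}\chi_a(t-\alpha)$ with $a\equiv j-j'\not\equiv 0\pmod N$, note $\chi_a$ is nontrivial, and apply Katz' bound (\ref{katz}). Your additional checks—that $\abs{M}=q$ and that the columns are unit vectors—are welcome tidiness (the paper only verifies the analogous distinctness claim in the proof of Theorem \ref{thm:3.2}), but they do not change the argument.
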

\begin{proof}
For $0\le j, k \le N-1$ and $j\neq k$, the inner product of the $j$th and
$k$th columns of the matrix $\Phi$
is
\begin{eqnarray*}
\langle \Phi_j, \Phi_k\rangle &=& \frac{1}q \sum_{r=0}^{q-1}
e^{\frac{2\pi j m_r \ri}N}e^{-\frac{2\pi k m_r \ri}N}
=\frac{1}q \sum_{r=0}^{q-1}e^{(j-k)\frac{2\pi m_r \ri}N}\\
&=&\frac{1}q \sum_{t\in \F_q}e^{(j-k)\frac{2\pi \log_g (t-\alpha)\ri}N}
\end{eqnarray*}

Since $j\neq k$,  $\chi (u) = e^{(j-k)\frac{2\pi \log_g (u)\ri}N}$ defines
a nontrivial multiplicative character of $\F_{q^n}^{\ast}$. By the
Katz estimation (\ref{katz}) we have
\[
\mu = \max_{j\neq k}|\langle \Phi_j, \Phi_k \rangle|
=\frac{1}q \left|\sum_{t\in \F_q}\chi (t-x)\right|\le \frac{n-1}{\sqrt{q}}.
\]
\end{proof}

We would like to point out that this result was also mentioned in \cite{BoDiFoKo}.
However, Theorem \ref{thm:3.1} only produces a $q\times (q^n-1)$ matrix and hence the
 matrix size in Theorem \ref{thm:3.1} is quite restrictive.
Note that in DeVore's construction \cite{Dev}, the matrix size is
$p^2\times p^n$ where $p$ is a prime. But Theorem \ref{thm:3.1} cannot always be
used to produce matrices of size
$\Theta(p^2\times p^n)$, e.g., when $n$ is an odd number.

We can have
more flexibilities in choosing the size $N$ of the Fourier matrix ${\cal F}^{(N)}$. This
provides a larger set of parameters for constructing
partial Fourier matrices for
compressed sensing.
More specifically, we have the following:
\begin{thm} \label{thm:3.2}
Let $q=p^a$ where $p$ is a prime number. Suppose that $b$ is a positive integer with $b\mid a$.
Let $\sds N=\frac{q^n-1}{p^b-1}$
where $n>1$ is a  positive integer and let
$$
M \,=\,\,\{m_k=\log_g (t_k-\alpha) \pmod N\, :\, t_k\in \F_q, k=0,\ldots,q-1\}.
$$
Then the MIC for the $q\times N$ matrix $\Phi=\frac{1}{\sqrt{q}}{\cal F}^{(N)}_M$ satisfies
\[
\mu \le \frac{n-1}{\sqrt{q}}.
\]
\end{thm}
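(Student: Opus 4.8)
The plan is to reduce everything to the mechanism of Theorem \ref{thm:3.1}: after rescaling the exponent by the factor $p^b-1$, the column inner products are again governed by a \emph{nontrivial} multiplicative character of $\F_{q^n}^*$, to which Katz' estimate (\ref{katz}) applies directly. First I would compute the inner product of two distinct columns $\Phi_j,\Phi_k$ with $0\le j,k\le N-1$. Exactly as in the proof of Theorem \ref{thm:3.1}, and using that each entry depends only on the residue $m_r \bmod N$, one gets
\[
\langle \Phi_j,\Phi_k\rangle=\frac{1}{q}\sum_{t\in\F_q}\exp\!\left((j-k)\frac{2\pi\ri\,\log_g(t-\alpha)}{N}\right).
\]
The essential new feature is the denominator $N=\frac{q^n-1}{p^b-1}$ rather than $q^n-1$.

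The next step is to absorb this smaller denominator into a genuine character. Writing $L=q^n-1=(p^b-1)N$ (here $b\mid a$ forces $p^b-1\mid p^{an}-1=q^n-1$, so $N\in\Z$), I would rewrite each summand as
\[
\exp\!\left((j-k)\frac{2\pi\ri\,\log_g(t-\alpha)}{N}\right)=\exp\!\left((j-k)(p^b-1)\frac{2\pi\ri\,\log_g(t-\alpha)}{L}\right)=\chi_{a'}(t-\alpha),
\]
where $a'\equiv(j-k)(p^b-1)\pmod L$ and $\chi_{a'}$ is now a multiplicative character of $\F_{q^n}^*$ with the correct modulus $L=q^n-1$. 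The key step is then to verify that $\chi_{a'}$ is nontrivial: since $\chi_{a'}$ is trivial exactly when $L\mid a'$, i.e. when $(p^b-1)N\mid(j-k)(p^b-1)$, this reduces to $N\mid(j-k)$; but $j\ne k$ with $0\le j,k\le N-1$ gives $0<|j-k|<N$, so $\chi_{a'}$ is nontrivial. Applying (\ref{katz}) to $\chi_{a'}$ yields
\[
\Bigabs{\sum_{t\in\F_q}\chi_{a'}(t-\alpha)}\le(n-1)\sqrt{q},
\]
whence $\mu=\max_{j\ne k}|\langle\Phi_j,\Phi_k\rangle|\le\frac{n-1}{\sqrt q}$, as claimed.

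I expect the only genuinely delicate point to be this nontriviality check, together with a preliminary verification (which I would include for completeness) that the $q$ residues $m_k=\log_g(t_k-\alpha)\bmod N$ are actually distinct, so that $\Phi$ really is a $q\times N$ matrix. The latter mirrors the argument inside Theorem \ref{2dim_sum}: a coincidence $m_{k_1}=m_{k_2}$ forces $\frac{t_{k_1}-\alpha}{t_{k_2}-\alpha}$ into the unique subgroup of $\F_{q^n}^*$ of order $p^b-1$, which is precisely $\F_{p^b}^*\subseteq\F_q^*$ (using $b\mid a$); solving $t_{k_1}-\alpha=s(t_{k_2}-\alpha)$ with $s\in\F_q^*$ then puts $\alpha\in\F_q$, contradicting $\F_{q^n}=\F_q(\alpha)$ with $n>1$. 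With these two bookkeeping facts in hand, the main estimate is immediate from Katz, so the crux of the theorem is really the rescaling identity $L=(p^b-1)N$ and the observation that $|j-k|<N$ keeps the resulting character away from triviality.
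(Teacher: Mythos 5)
Your proposal is correct and follows essentially the same route as the paper's own proof: the same rescaling identity $\tilde N=(p^b-1)N=q^n-1$ turning the sum into a character sum $\sum_{t\in\F_q}\chi_{a'}(t-\alpha)$ with $a'=(j-k)(p^b-1)$, the same distinctness argument forcing $\frac{t_{k_1}-\alpha}{t_{k_2}-\alpha}$ into $\F_{p^b}^*\subseteq\F_q^*$ and hence $\alpha\in\F_q$, and the same appeal to Katz' bound (\ref{katz}). Your explicit verification that $L\mid a'$ would force $N\mid(j-k)$ merely spells out the nontriviality claim the paper asserts without detail, so the two arguments coincide in substance.
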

\begin{proof} To this end, we
first prove that $\#M=q$, i.e.,   $m_k\neq m_j$ if $k\neq j$.
 We assume that, for some $k\neq j$,
$m_k = m_j$ holds. Then
\[
 \log_g (t_k-\alpha) \equiv \log_g (t_j-\alpha) \pmod N.
\]
This implies that
\begin{equation}\label{thm3.2_eq:1}
\frac{t_k-\alpha}{t_j-\alpha} = g^{dN}
\end{equation}
for some integer $d$. Obviously $g^{dN}\neq 1$ as $t_k\neq t_j$.
Noting that  $(g^{dN})^{p^b-1}=\big(g^{q^n-1}\big)^d = 1$, we see that
\[
g^{dN}\in \F_{p^b}\subset \F_q.
\]
Solving (\ref{thm3.2_eq:1}) for $\alpha$, we get
\[
\alpha\, =\, \frac{g^{dN}t_j-t_k}{g^{dN}-1}\,\in\, \F_q,
\]
which implies that $\F_q(\alpha)=\F_q$. This is impossible as $n>1$.

Now we consider the  $q\times N$ matrix $\Phi$.
The inner product of the $j$th and
$k$th columns of the matrix $\Phi$ is
\begin{eqnarray*}
\langle \Phi_j, \Phi_k\rangle &=& \frac{1}q \sum_{m\in M}
e^{\frac{2\pi j m \ri}{N}}e^{-\frac{2\pi k m \ri}{N}}
=\frac{1}q \sum_{m\in M}e^{(j-k)(p^b-1)\frac{2\pi m \ri}{\tilde{N}}}\\
&=&\frac{1}q \sum_{t\in \F_q} e^{(j-k)(p^b-1)\frac{2\pi \log_g (t-\alpha)\ri}{\tilde{N}}},
\end{eqnarray*}
where $\tilde{N} = N(p^b-1) = q^n-1$.
Since for $0\le j, k < N$ and $j\neq k$,
\[
\chi (u) = e^{(j-k)(p^b-1)\frac{2\pi \log_g (u)\ri}{\tilde{ N}}}
\]
 defines a nontrivial multiplicative character of $\F_{q^n}^{\ast}$. Again, by the
Katz estimation (\ref{katz}), we arrive at
\[
\mu = \max_{j\neq k}|\langle \Phi_j, \Phi_k \rangle|
=\frac{1}q \left|\sum_{t\in \F_q}\chi (t-\alpha)\right|\le \frac{n-1}{\sqrt{q}}.
\]
\end{proof}

\begin{remark}
\begin{enumerate}
\item
According to our discussion in the previous section, a
$k$-sparse signal $\beta$ in model (\ref{P0}) can be reconstructed via
the $\ell_1$-minimization ($P_1$) as long as $\mu <\frac{1}{2k-1}$.
The matrix $\Phi$ in   Theorem \ref{thm:3.1} and Theorem \ref{thm:3.2},
can be a CS matrix for recovering $k$ sparse
signals if
\[
k < \frac{\sqrt{q}}{2(n-1)}+\frac{1}2.
\]
This bound is the same as the case of binary CS matrices obtained by DeVore \cite{Dev}.
Also, this bound improves the one for subsampling Fourier matrices in \cite{HaApNo} greatly.
\item In terms of matrix dimension, our theorems  provide
more flexibilities. In \cite{Dev}, the (binary) CS matrices can be of
size $p^2\times p^r$ for any $r\ge 2$. We can get similar dimensions
for partial Fourier CS matrices too. Taking $q=p^2$ in theorem \ref{thm:3.1},
we can get a $p^2\times \Theta(p^{2n})$ matrix; letting $a=2$ and $b=1$ in theorem \ref{thm:3.2},
we have
a $p^2\times \Theta(p^{2n-1})$ matrix.
\item
For the number ${N}=\frac{q^n-1}{p^b-1}=\frac{p^{an}-1}{p^b-1}$,
as pointed out in \cite{CaJi}, the coherence of a $q\times {N}$ random matrix with i.i.d.
Gaussian entries is about $2\sqrt{\frac{\log {N}}{q}}\thickapprox 2\sqrt{\frac{(an-b)\log p}{q}}$. Hence,
when $\log p>\frac{(n-1)^2}{4{(an-b)}}$, MIC of the deterministic  matrix $\Phi$ given in Theorem \ref{thm:3.2} is smaller
than that of random matrices.
\end{enumerate}
\end{remark}

Finally in this section, we restrict ourselves to the case of quadratic extension. We shall construct a partial Fourier matrix whose columns
form a union of orthonomal bases, by using our improvement of Katz' estimation. The construction of such matrix is also raised  in quantum information theory

Let $q$ be a prime power and let $\alpha$ be such that $\F_{q^2}=
\F_q(\alpha)$. As before, we let $N=q^2-1$ and denote
\begin{equation}\label{eq:M}
M=\{m = \log_g (t-\alpha):t\in \F_q\}.
\end{equation}
It is easy to see that $0\notin M$.  We are able to state
\begin{thm}\label{th:amub} Let
\begin{equation}\label{eq:phi}
\Phi = \frac{1}{\sqrt{q+1}} {\cal F}^{(N)}_{M\cup \{0\}}\qquad \in\qquad \C^{(q+1)\times (q^2-1)},
\end{equation}
where $M$ is defined in   (\ref{eq:M}).
For each $j=0,\ldots, q-2$, set
\[
T_j=\{j+k\cdot (q-1): 0 \leq k\leq q\}.
\]
Then we have
\begin{enumerate}
\item For any $0\leq j\leq q-2$, $\Phi_{T_j}$ is an orthogonal matrix.
\item For $k_1\in T_{j_1}$ and $k_2\in T_{j_2}$ with $j_1\neq j_2$, we have
\[
 \frac{\sqrt{q}-1}{q+1} \leq \bigabs{\left<\Phi_{k_1},\Phi_{k_2}\right>}\leq \frac{\sqrt{q}+1}{q+1}
\]
\end{enumerate}
\end{thm}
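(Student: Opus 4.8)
The plan is to compute inner products of columns of $\Phi$ directly and split into cases according to whether both row-indices come from $M$ or one of them is the extra row $0$. The columns of $\Phi$ are indexed by $0,\ldots,N-1$, and the key structural fact is that the index set decomposes as a disjoint union $\{0,\ldots,N-1\}=\bigcup_{j=0}^{q-2}T_j$, since $N=q^2-1=(q-1)(q+1)$ and each $T_j$ collects the $q+1$ indices congruent to $j$ modulo $q-1$. Thus two column indices $k_1,k_2$ lie in the same block $T_j$ exactly when $k_1\equiv k_2\pmod{q-1}$.

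For part (1), I would show orthogonality of the columns within a single $T_j$. Writing out $\langle\Phi_{k_1},\Phi_{k_2}\rangle$ as a normalized sum over the rows indexed by $M\cup\{0\}$, the row $0$ contributes the constant $1$, and the rows in $M$ contribute $\sum_{t\in\F_q}\chi(t-\alpha)$ where $\chi(u)=e^{(k_1-k_2)2\pi\ri\log_g u/N}$. When $k_1,k_2\in T_j$ with $k_1\neq k_2$, the difference $k_1-k_2$ is a nonzero multiple of $q-1$, so the character $\chi$ on $\F_{q^2}^*$ corresponds (after clearing the factor $q-1$) to a parameter $a$ that is a multiple of $q-1$; hence by the second case \eqref{eq:lemma2} of Theorem~\ref{2dim_sum} the character sum equals $-1$. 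The two contributions $1$ and $-1$ cancel, giving $\langle\Phi_{k_1},\Phi_{k_2}\rangle=0$. Since $\Phi_{T_j}$ has $q+1$ columns, each of unit length with the normalization $\tfrac{1}{\sqrt{q+1}}$, the $q+1$ mutually orthogonal unit columns in $\C^{q+1}$ form an orthonormal basis, so $\Phi_{T_j}$ is an orthogonal (unitary) matrix.

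For part (2), I take $k_1\in T_{j_1}$, $k_2\in T_{j_2}$ with $j_1\neq j_2$, so that $k_1-k_2$ is \emph{not} divisible by $q-1$; equivalently, in the character parametrization the relevant $a$ satisfies $(q-1)\nmid a$, and the first case \eqref{eq:lemma1} of Theorem~\ref{2dim_sum} applies, giving $\bigl|\sum_{t\in\F_q}\chi(t-\alpha)\bigr|=\sqrt q$. Now the inner product is $\tfrac{1}{q+1}\bigl(1+\sum_{t\in\F_q}\chi(t-\alpha)\bigr)$, and by the triangle inequality its modulus lies between $\tfrac{1}{q+1}(\sqrt q-1)$ and $\tfrac{1}{q+1}(\sqrt q+1)$, which is exactly the claimed two-sided bound.

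The one point needing care — the main obstacle — is the bookkeeping that translates the column-index difference $k_1-k_2$ (taken modulo $N=q^2-1$) into the character parameter governing the two cases of Theorem~\ref{2dim_sum}. Concretely, I must verify that $k_1,k_2$ belong to the same $T_j$ if and only if $(q-1)\mid(k_1-k_2)$, and that this divisibility condition is precisely the dichotomy $(q-1)\mid a$ versus $(q-1)\nmid a$ that separates \eqref{eq:lemma2} from \eqref{eq:lemma1}; this rests on the definition $T_j=\{j+k(q-1):0\le k\le q\}$ together with $N=(q-1)(q+1)$. I would also note in passing that $0\notin M$ (stated in the excerpt), so adjoining the row $0$ genuinely enlarges the row set to size $q+1$ and the block sizes $|T_j|=q+1$ match the number of rows, which is what makes the counting in part (1) close up.
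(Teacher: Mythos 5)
Your proposal is correct and follows essentially the same route as the paper's own proof: the same case split according to whether $(q-1)\mid(k_1-k_2)$, the same appeal to the two cases (\ref{eq:lemma2}) and (\ref{eq:lemma1}) of Theorem \ref{2dim_sum} (with nontriviality of the character guaranteed by $k_1\neq k_2$ modulo $N$), and the same triangle-inequality estimate $\frac{1}{q+1}\bigl|1+\sum_{t\in\F_q}\chi(t-\alpha)\bigr|$ for part (2). The bookkeeping points you single out --- the partition $\{0,\ldots,N-1\}=\bigcup_j T_j$ via residues modulo $q-1$ and the fact that $0\notin M$ makes the row count $q+1$ match $|T_j|$ --- are left implicit in the paper, and your handling of them is sound.
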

\begin{proof}
For any $k_1,k_2\in T_j$ with $k_1\neq k_2$. Since $(q-1)\mid k_1-k_2$, using
(\ref{eq:lemma2}) of theorem \ref{2dim_sum} we get
\begin{eqnarray*}
\langle \Phi_{k_1}, \Phi_{k_2}\rangle &=& \frac{1}{q+1}\left(1+ \sum_{r=0}^{q-1}
e^{\frac{2\pi k_1 m_r \ri}N}e^{-\frac{2\pi k_2 m_r \ri}N}\right)
=\frac{1}{q+1}\left(1+ \sum_{r=0}^{q-1}e^{(k_1-k_2)\frac{2\pi m_r \ri}N}\right)\\
&=&\frac{1}{q+1}(1+(-1))=0.
\end{eqnarray*}
This shows $\Phi_{T_j}$
is an orthogonomal matrix for any $0\le j \le q-2$.

We next consider the inner product of
$\Phi_{k_1}, \Phi_{k_2}$ with $k_1$ and $k_2$ belonging to different sets $T_j, j=0,\ldots,q-2$.
We have
\begin{eqnarray*}
\abs{\langle \Phi_{k_1}, \Phi_{k_2}\rangle} &= & \frac{1}{q+1}\Bigabs{1+ {\sum_{r=0}^{q-1}
e^{\frac{2\pi k_1 m_r \ri}N}e^{-\frac{2\pi k_2 m_r \ri}N}}}\\
&=&\frac{1}{q+1}\Bigabs{1+ {\sum_{r=0}^{q-1}
e^{\frac{2\pi (k_1-k_2) m_r \ri}N}}}
\end{eqnarray*}
Since $(q-1)\nmid k_1-k_2$, using
(\ref{eq:lemma1}) of theorem \ref{2dim_sum} we get
$$
\frac{\sqrt{q}-1}{q+1}\leq \abs{\langle \Phi_{k_1}, \Phi_{k_2}\rangle}\leq \frac{\sqrt{q}+1}{q+1}.
$$
\end{proof}
\begin{remark} The concept of
{\em approximately mutually unbiased bases } (AMUBs)
(see \cite{AMUB, AMUB1, AMUB2}) arises from quantum information theory. An
$m\times N$  matrix
$\Phi$ (with columns of unit length and with $m\mid N$) is a AMUBs if (i)
the set of columns of $\Phi$ can be partitioned into $\frac{N}m$ sets
of $m$ vectors, and each set is an orthonomal basis for $\C^m$; (ii)
$ \abs{\left<\Phi_{k_1},\Phi_{k_2}\right>}=\sqrt{\frac{1}{m}}+o(1)$ holds
if $\Phi_{k_1}$ and $\Phi_{k_2}$ are taken from different bases.
By Taylor expansion, we have
\begin{eqnarray*}
\frac{\sqrt{q}+1}{q+1}=\sqrt{\frac{1}{q+1}}+O\left(\frac{1}{q+1}\right),\qquad \frac{\sqrt{q}-1}{q+1}=\sqrt{\frac{1}{q+1}}+O\left(\frac{1}{q+1}\right).
\end{eqnarray*}
So Theorem  \ref{th:amub} says that we have actually constructed an AMUBs.
\end{remark}

\section{Computational Issues and Numerical Example}
In this section we will first discuss the issues of finding a primitive root $g$ in $\F_{q^n}$ and of computing discrete logarithm with respect to $g$.
We then explain some results from our numerical experiments.
\subsection{Computational Issues}
Finding a primitive root of a finite field is an interesting problem. In \cite{shpar},
Shparlinski gave a deterministic algorithm which returns a primitive root $g$ of $\F_{q^n}$
in time $O(q^{\frac{n}4+\varepsilon})$. It is noted that there is no polynomial time algorithm
available for this problem even under the General Riemann Hypothesis unless $q$ is prime and $n\le 2$ (see Shoup \cite{shoup}).
The idea of Shparlinski's algorithm is to construct a subset $M\subset \F_{q^n}$ such that
\begin{enumerate}
\item $M$ contains
a primitive root;
\item $|M| = O(q^{\frac{n}4})$.
\end{enumerate}
To identify a primitive root in $M$ one just uses
the following fact:  an element $g \in \F_{q^n}^{\ast}$ is a primitive root if and only if $g^{\frac{q^n-1}\ell}\neq \ell$
for any prime factor $\ell$ of $q^n-1$.

Computing a discrete logarithm over a finite field is also of great importance because the hardness of
this problem is a base of several well-known cryptosystems. The index calculus methods provide efficient ways
of solving discrete logarithm problem over  finite fields (see \cite{odly}). However, since in our setting, the size of
the finite field involved is far smaller than that in the cryptographical setting, we can use the following algorithm of Shanks \cite{shanks}
to compute the discrete logarithm in $\F_{q^n}^{\ast}$.  Again, denote $N=q^n-1$ and let $K=\lceil \sqrt{N}\rceil$.\\
\newpage
\begin{center}
\underline{\bf Computing $\log_g u$ for $u\in \F_{q^n}^{\ast}$}\\
\begin{tabular}{ll}
1& For all $0 \le j < K$, compute $g^j$ and store $(j, g^j)$ in list $L$.\\
2& Compute $g^{-K}$\\
3& For $ i = 0$ to $K-1$\\
 & $\quad\quad$ Compute $u (g^{-K})^i$ \\
 & $\quad\quad$ if $u (g^{-K})^i$ equals some $g^j$ in the list $L$, return $\log_g u =iK+j$.\\
\end{tabular}
\end{center}
The computational costs are $O(\sqrt{N})$ both in time and space.

\subsection{ Numerical Examples}
The purpose of the experiments is to provide  some performance comparisons
of the random sampling and the deterministic sampling of Fourier matrices.
We use the method introduced in this paper to produce a deterministic partial
Fourier matrix. To compare with the deterministic partial
Fourier matrix, we generate the random partial Fourier matrix by
choosing the same number of rows (as that in the deterministic case)
from the Fourier matrix in a uniformly
random  manner. We reconstruct the Fourier
coefficients by OMP (see \cite{rauhut1,xuomp}) and BP, respectively.
For BP, we use the optimization
tools of CVX \cite{cvx}.   We repeat the experiment 100 times for each
sparsity $k\in \{1,\ldots,20\}$ and calculate the success rate in Example 1
and Example 2. In Example 3, we compare the maximum and minimum eigenvalue
statistics of Gram matrices $\Phi_T^\top\Phi_T$ of varying sparsity
$k=|T|$ for a deterministic sampling matrix and a random sampling matrix.
It is interesting to note that all the numerical results show that
the deterministic partial Fourier matrix
has a better performance over the random partial Fourier matrix,
regardless
whether the sparsity if within the theoretical range (i.e. $k <
\frac{1}2\left(\frac{1}{\mu}+1\right)$) or not.

\begin{example}
Take $q=29$ and $N=q^2-1=840$. Consider the field $\F_{29^2}=\F_{29}[x]/(x^2+2)$.
Denote $\bar{x} = x + (x^2+2)$ and choose $\alpha=\bar{x}, g=\bar{x}+2$. We have
\begin{eqnarray*}
M_1 &=& \{\log_g(t-\alpha):t\in \F_{29}\}\\
&=&\{465, 1, 494, 649, 47, 507, 758, 610, 835, 244, 67, 204, 588, 519, \\
& & \: \: 332, 808,
351,  672, 456, 683, 776, 275, 470, 562, 3, 103, 761, 466, 449\},
\end{eqnarray*}
By Theorem \ref{thm:3.1}, we can build a $29\times 840$ partial
Fourier matrix $\Phi=\frac{1}{\sqrt{q}}{\cal F}^{(N)}_{M_1}$.
To compare, we generate the random partial Fourier matrix
selecting $29$ rows from the $840\times 840$ Fourier matrix
uniformly randomly.

 Figure 1  depicts the recovery results with using the recovery  algorithm OMP and BP, respectively.
\end{example}

  \begin{figure}[!ht]
\begin{center}
\epsfxsize=5cm\epsfbox{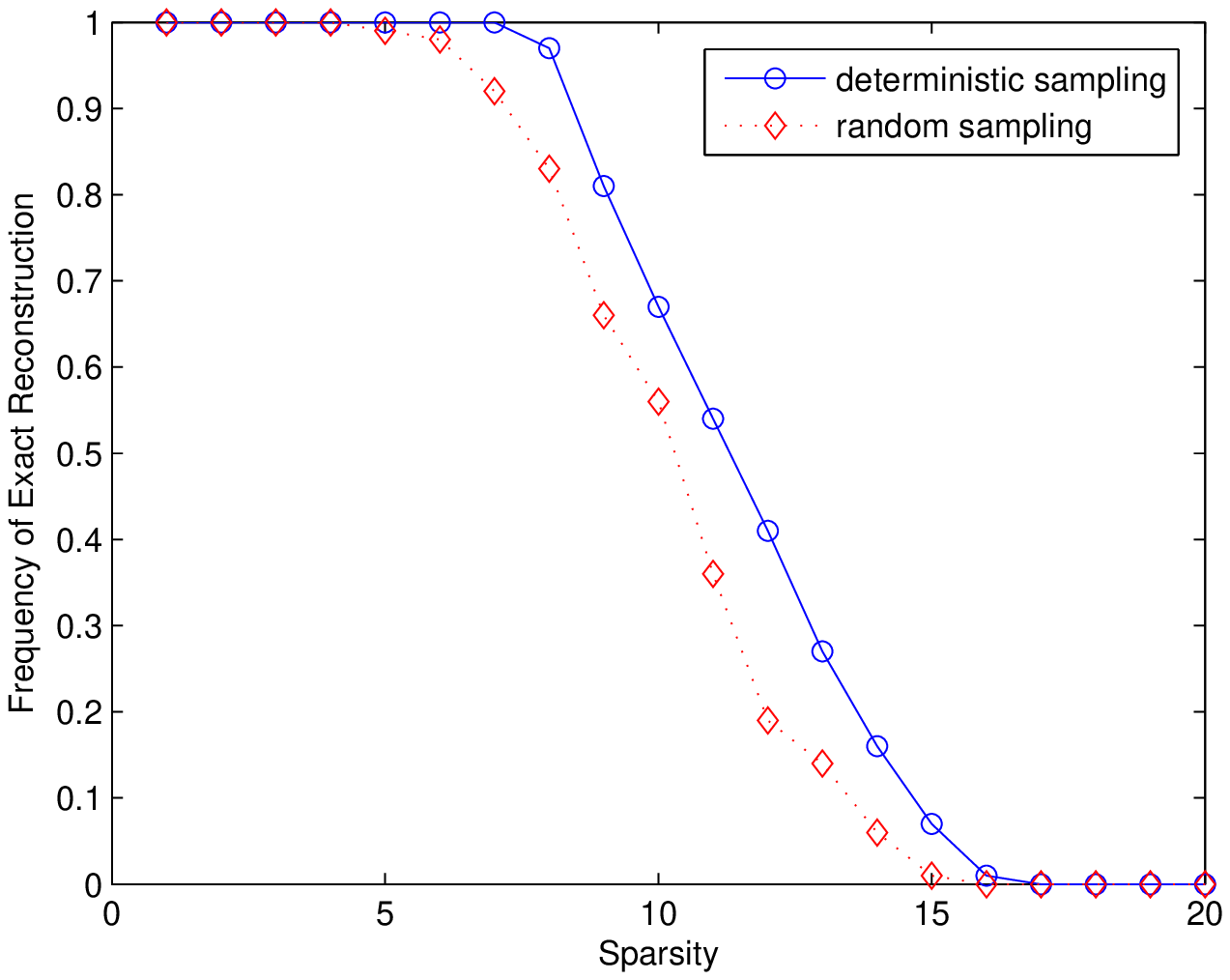}\,\,\,\,\,\,\,\,\, \epsfxsize=5cm\epsfbox{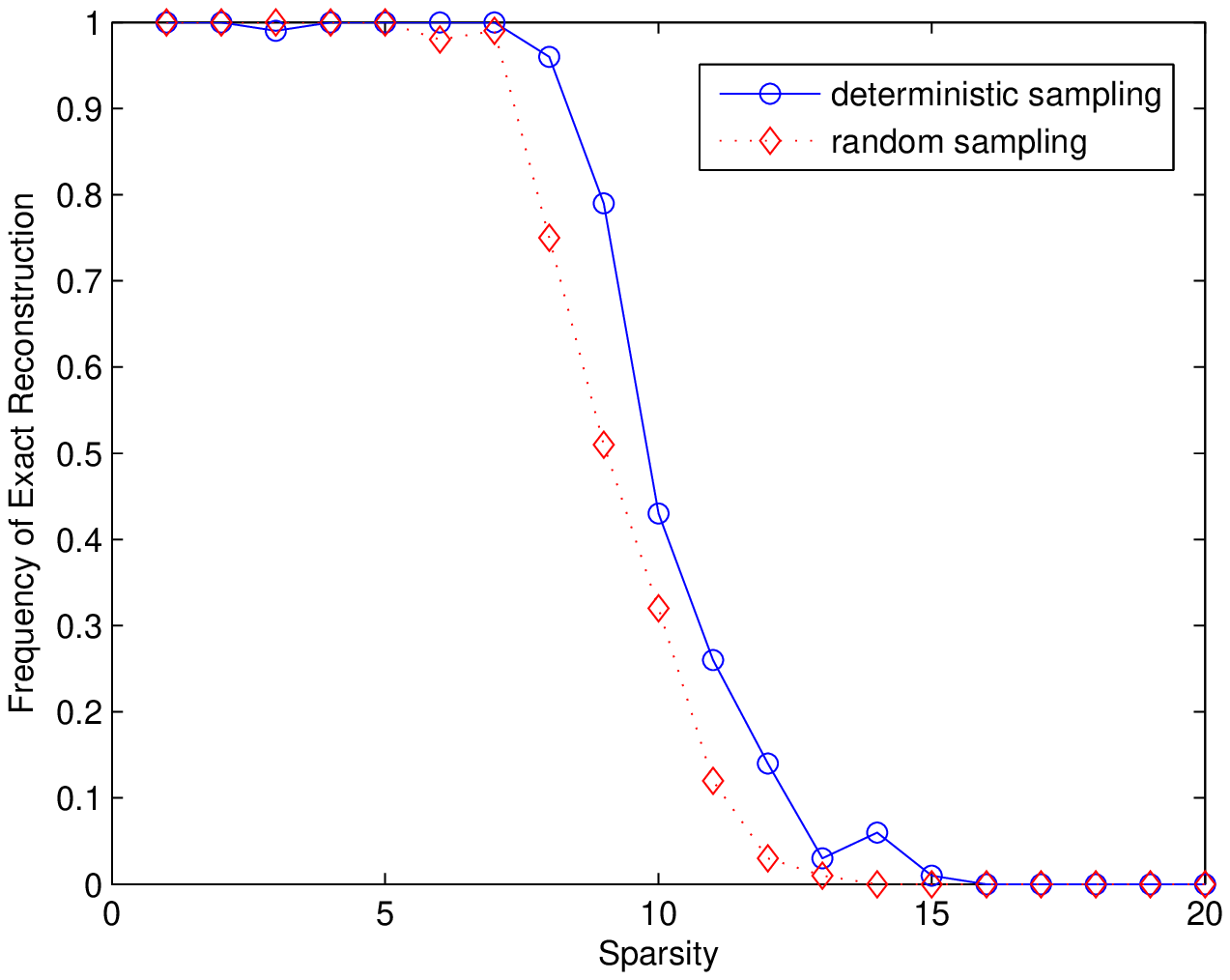}
\end{center}
\caption{
Numerical experiments for the  comparison of the random  and the deterministic  $29\times 840$ partial Fourier matrix.  The left graph corresponds to the success rates recovering by OMP, whereas the right one depicts the success rate of BP.}

\end{figure}

\begin{example}
We take $q=p=19$ and ${N}=\frac{p^3-1}{p-1}=p^2+p+1=381$. Then we  have
\begin{eqnarray*}
M_2 &=& \{\log_g(t-\alpha)\: {\rm mod }\: {N} :t\in \F_{19}\}\\
&=&\{ 192,   208,   162,   165,   160,    39,   154,   141,   245,   356,   304,   311,   223,   40,   321,    68,   118,   174,   249\},
\end{eqnarray*}
where $\F_{19^3}=\F_{19}[x]/(x^3+x+1), \alpha=\bar{x}, g=\bar{x}^2+2\bar{x}$.
By Theorem \ref{thm:3.2}, we can build a $19\times 381$ sub-Fourier matrix $\Phi=\frac{1}{\sqrt{q}}{\cal F}^{({N})}_{M_2}$. Similar to Example 1, we also compare the deterministic sampling and the random sampling and show the result in Figure 2.
  \begin{figure}[!ht]
\begin{center}
\epsfxsize=5cm\epsfbox{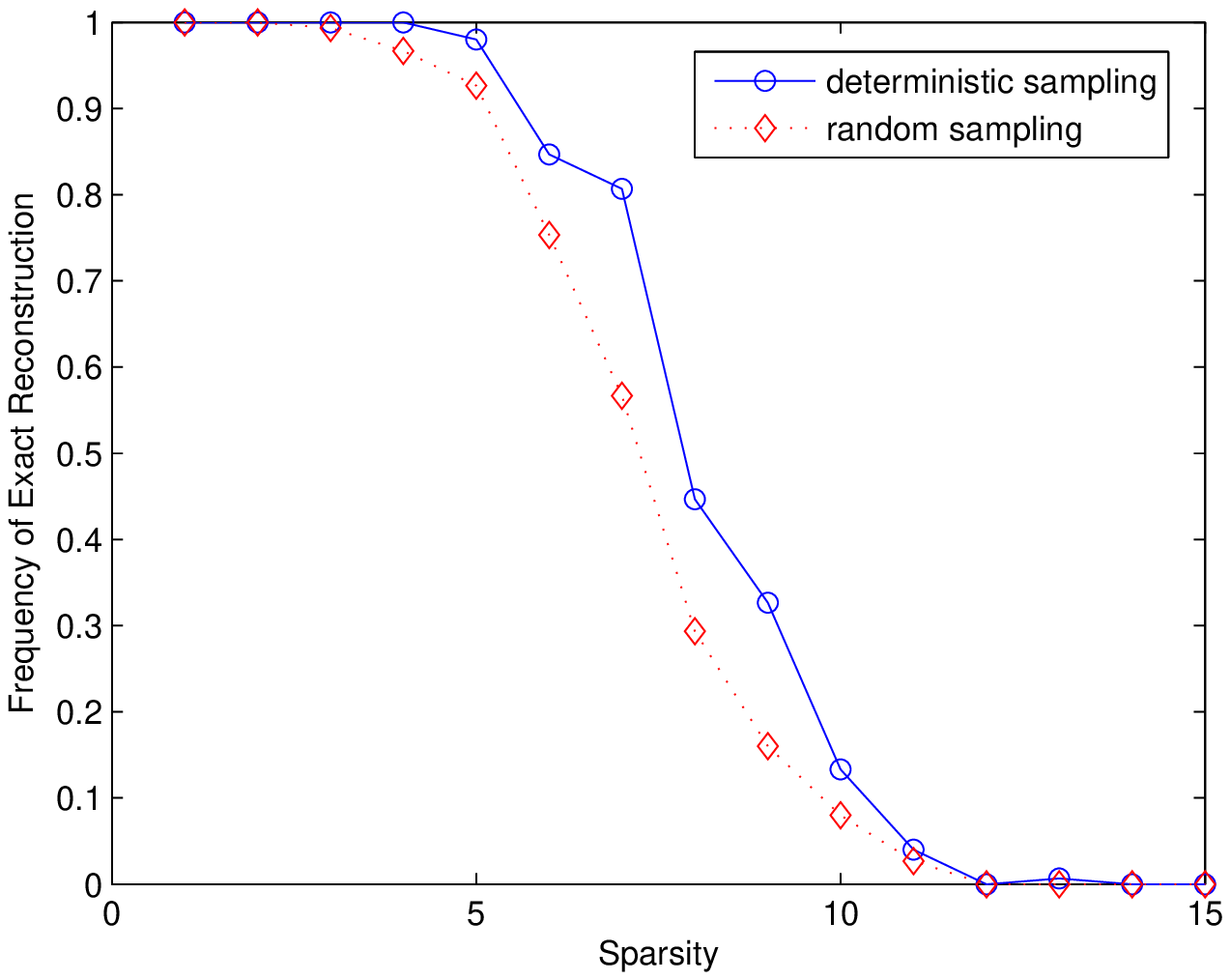}\,\,\,\,\,\,\,\,\, \epsfxsize=5cm\epsfbox{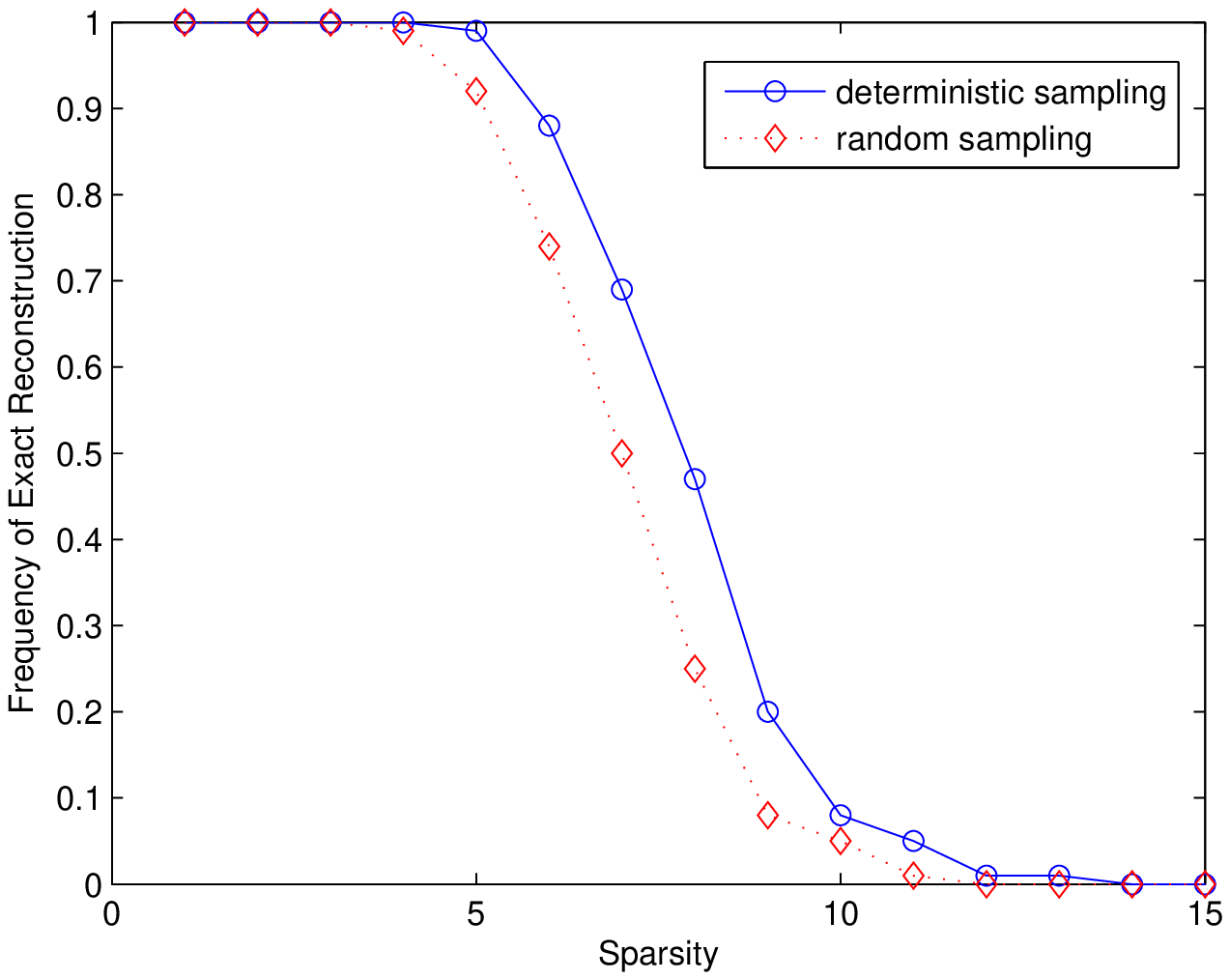}
\end{center}
\caption{Numerical experiments for the  comparison of the random   and the deterministic  $19\times 381$ partial Fourier  matrix.  The left graph corresponds to the success rates recovered by OMP, whereas the right one depicts the success rate of BP.}

\end{figure}
\end{example}

\begin{example}
The aim of the numerical experiment is to compare the maximum and minimum eigenvalue statistics of Gram matrices $\Phi_T^\top\Phi_T$ of varying sparsity $M=|T|$ for deterministic sampling matrix and random sampling matrix.
In fact,  RIP of order $k$ with constant $\delta_k$ is equivalent to requiring that the
 Grammian matrices $\Phi_T^\top\Phi_T$ has all of its eigenvalues in $[1-\delta_k,  1+\delta_k]$ for all $T$ with $|T|\leq k $.
For every value $M$, sets $T$ are drawn uniformly random over all sets and the statistics are accumulated
  from 50,000 samples. Figure 3 shows  the maximum and minimum eigenvalues of $\Phi_T^\top\Phi_T$ for  $\#T\in \{1,\ldots,20\}$.
   \begin{figure}
\begin{center}
\epsfxsize=5cm\epsfbox{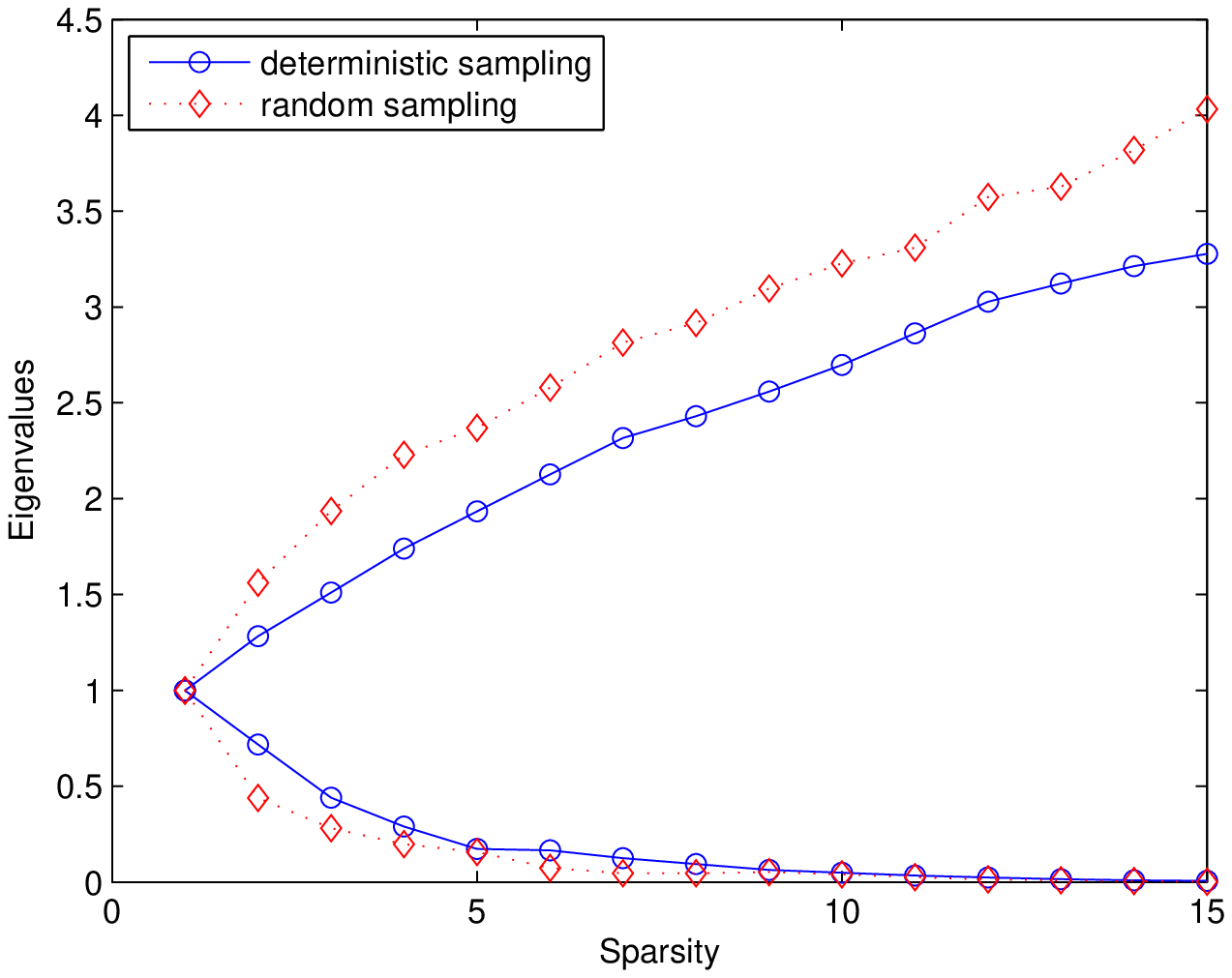} \,\,\,\,\,\,\,\,\, \epsfxsize=5cm\epsfbox{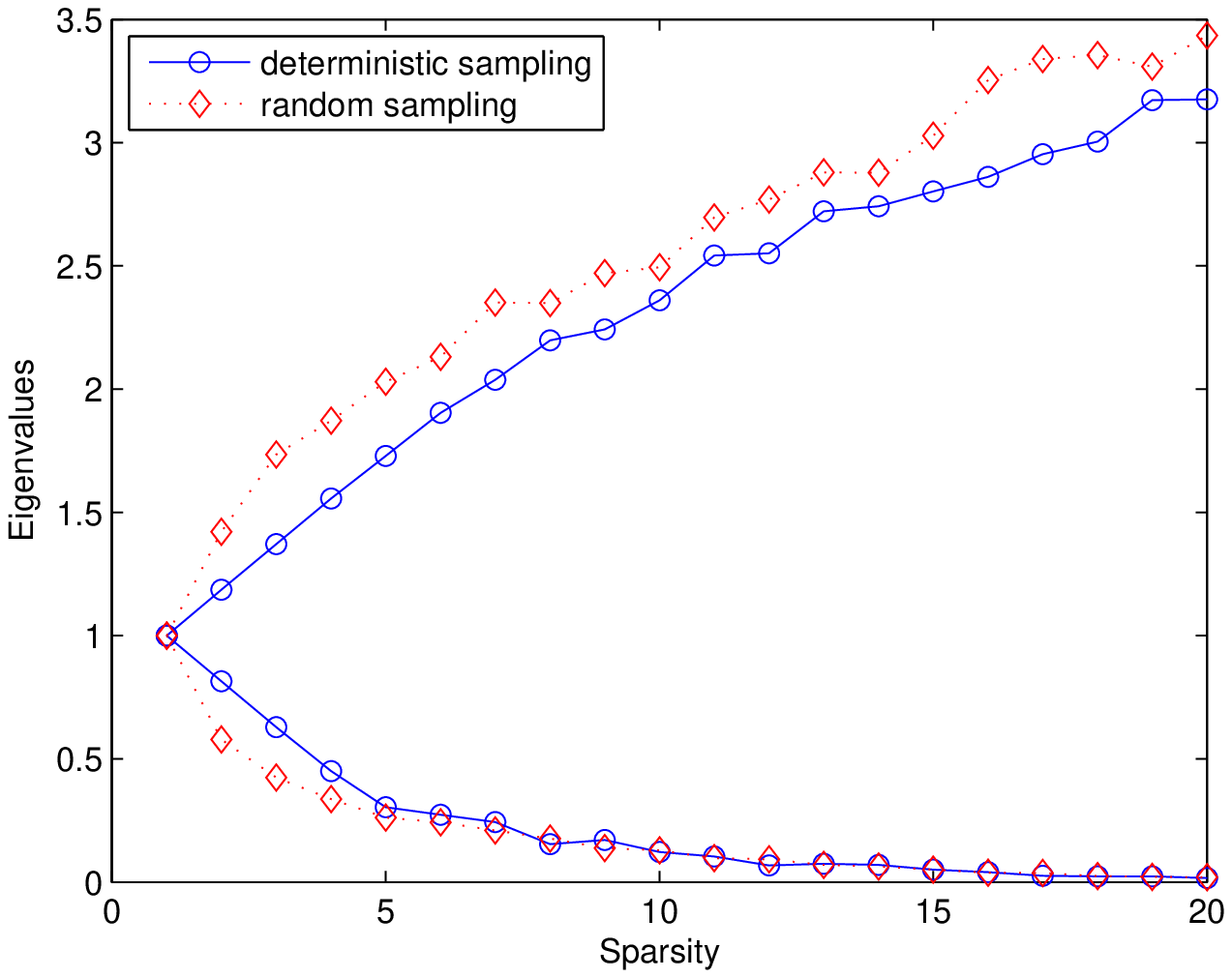}\caption{Eigenvalue statistics of
Gram matrices $\Phi_T^\top\Phi_T$ for the deterministic sampling matrixes
and random sampling matrixes. The left graph shows the result of
$19\times 381$ partial Fourier matrix in Example 2, whereas the right one
depicts the result of $29\times 840$ partial Fourier matrix in Example 1.}
\end{center}
\end{figure}
\end{example}

{\bf Acknowledgement}:
Part of this work was done when the first author visited the
Inst. Comp. Math., Academy of Mathematics and Systems Science,
Chinese Academy of Sciences. He is grateful for the warm hospitality.

\end{document}